	\title{A Weakly Intuitionistic Quantum Logic}
	\author{Ronnie Hermens}
	\DeclareMathOperator{\h}{\mathcal{H}}
	\DeclareMathOperator{\pee}{\mathds{P}}
	\DeclareMathAlphabet{\mathpzc}{OT1}{pzc}{m}{it} 	
	\newtheorem{stelling}{Theorem}[section]
	\newtheorem{lemma}[stelling]{Lemma}
	\newtheorem{gevolg}[stelling]{Corollary}
\begin{document}

\setlength{\unitlength}{\textwidth}

\maketitle

\begin{abstract}
In this paper we motivate and study the possibility of an intuitionistic quantum logic. An explicit investigation of the application of the theory of Bruns and Lakser on distributive hulls on traditional quantum logic (as suggested in \cite{Coecke02}) leads us to a small modification of this scheme. In this way we obtain a weak Heyting algebra (cf. \cite{Celani05}) for describing the language of quantum mechanics.
\end{abstract}

\section{An intuitionistic perspective on quantum logic}
Physical theories are concerned with statements about possible outcomes of experiments. A possible experiment may be termed an observable. In quantum mechanics, observables are identified with self-adjoint operators acting on a Hilbert space $\h$ whose domains are dense in $\h$. For a self-adjoint operator $A$, with spectrum $\sigma(A)$, we introduce the notation $A\in\Delta$ for some Borel set $\Delta\subset\sigma(A)$ for the statement that a measurement of $A$ will yield a result in $\Delta$ with probability one.\footnote{We will make no distinction in notation between observables and operators.} Quantum mechanics predicts that this is the case whenever the state of the system lies in the set $\mu_A(\Delta)\h$, where $\mu_A$ is the projective measure associated with the operator $A$. Consequently, statements of the form $A\in\Delta$ can be associated with closed linear subspaces of a Hilbert space. On the other hand, since there is a bijection between closed linear subspaces and projection operators, every closed linear subspace can be associated with a statement of the form $A\in\Delta$ (since projections are self-adjoint). 

This observation moved Birkhoff and von Neumann \cite{BirkhoffNeumann36} to introduce the quantum propositional lattice $L(\h)$ which consists of the set of closed linear subspaces of the Hilbert space $\h$ with partial order, meet and join defined in the following way:
\begin{itemize}
	\item $K_1\leq K_2$ iff $K_1\subset K_2$.
	\item $\bigwedge_{K\in \mathcal{K}}K:= \bigcap_{K\in \mathcal{K}}K$, $\mathcal{K}\subset L(\h)$.
	\item $\bigvee_{K\in \mathcal{K}}K:= \bigwedge\{K'\in L(\h)\:;\:K\leq K'\: \forall K\in\mathcal{K}\}$, $\mathcal{K}\subset L(\h)$.
\end{itemize}

A negation is also defined as
\begin{itemize}
	\item $\neg K:=\bigwedge\{K'\in L(\h)\:;\:K\vee K'=\h\}$.
\end{itemize}

The resulting lattice is almost a Boolean algebra, accept for the fact that the laws of distributivity
\begin{equation}
	K_1\vee(K_2\wedge K_3)=(K_1\vee K_2)\wedge(K_1\vee K_3)
\end{equation}
and 
\begin{equation}
	K_1\wedge(K_2\vee K_3)=(K_1\wedge K_2)\vee(K_1\wedge K_3)
\end{equation}
do not hold in general. As a consequence, it is hard to interpret the meet and join as the logical connectives ``and'' and ``or'' (cf. \cite{Dummett76}). Needless to say, quantum logic has struggled with interpretation problems ever since it was conceived. 

On the other hand, the fundamental problems of quantum mechanics have resulted in a consensus that quantum mechanics is incompatible with the logical structure of a classical phase space (cf. \cite{Isham95}). We, however, believe there may be a stronger discrepancy between quantum mechanics and classical logic. As an example we consider the following derivation of a Bell-type inequality.

\begin{lemma}\label{Belllemma}
Suppose $\pee$ is a probability function on a collection of sentences $S$ that satisfies the following rules for all $A,B\in S$:
\begin{enumerate}
\item If $A\to B$, then $\pee(A)\leq\pee(B)$.
\item $\pee(A\vee B)\leq\pee(A)+\pee(B)$.
\end{enumerate}
Then, if $S$ obeys classical logic, the following inequality holds for all $A_1,A_2,B_1$ and $B_2$ in $S$:
\begin{equation}\label{Bellineq}
	\pee(A_1\wedge B_1)\leq
	\pee(A_1\wedge B_2)+\pee(A_2\wedge B_1)+\pee(\neg A_2\wedge \neg B_2).
\end{equation}
\end{lemma}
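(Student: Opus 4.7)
The plan is to derive the inequality in two stages: first establish a purely logical fact about the sentences in $S$, using classical logic, and then translate it into the desired probabilistic inequality by invoking the two hypotheses on $\pee$.

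The logical step is to show that, classically,
\begin{equation*}
	A_1\wedge B_1 \;\to\; (A_1\wedge B_2)\vee(A_2\wedge B_1)\vee(\neg A_2\wedge\neg B_2).
\end{equation*}
I would do this by case analysis on the truth values of $A_2$ and $B_2$, which classical logic allows via the tautology $(A_2\wedge B_2)\vee(A_2\wedge\neg B_2)\vee(\neg A_2\wedge B_2)\vee(\neg A_2\wedge\neg B_2)$. Assuming $A_1\wedge B_1$: in the cases $A_2\wedge B_2$ and $\neg A_2\wedge B_2$ the conjunct $A_1\wedge B_2$ holds; in the case $A_2\wedge\neg B_2$ the conjunct $A_2\wedge B_1$ holds; and in the remaining case $\neg A_2\wedge\neg B_2$ the third disjunct holds directly. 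So in every case the disjunction on the right-hand side is true, which gives the implication.

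For the probabilistic step, rule (i) applied to this implication yields
\begin{equation*}
	\pee(A_1\wedge B_1)\leq \pee\bigl((A_1\wedge B_2)\vee(A_2\wedge B_1)\vee(\neg A_2\wedge\neg B_2)\bigr),
\end{equation*}
and two applications of rule (ii) (subadditivity extended to three disjuncts by grouping, e.g. $X\vee Y\vee Z = (X\vee Y)\vee Z$) bound the right-hand side by $\pee(A_1\wedge B_2)+\pee(A_2\wedge B_1)+\pee(\neg A_2\wedge\neg B_2)$, which is exactly \eqref{Bellineq}.

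The main obstacle is conceptual rather than technical: one has to recognize that the only nontrivial input is the classical case split on $A_2,B_2$, which in particular uses the law of excluded middle together with distributivity to rewrite $A_1\wedge B_1$ as a disjunction of four mutually exclusive conjunctions. This is precisely the step that will fail in a weaker, non-classical setting, which is what the rest of the paper presumably exploits; the remaining manipulations are just monotonicity and subadditivity of $\pee$.
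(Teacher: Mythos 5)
Your proof is correct and rests on essentially the same ingredients as the paper's: the law of excluded middle applied to $A_2$ and $B_2$, distributivity, monotonicity of $\pee$ (rule (i)) and subadditivity (rule (ii)). The only difference is organizational --- you front-load all the classical reasoning into the single implication $A_1\wedge B_1\to(A_1\wedge B_2)\vee(A_2\wedge B_1)\vee(\neg A_2\wedge\neg B_2)$ and then apply the probability rules, whereas the paper interleaves the two case splits (first on $B_2\vee\neg B_2$, then on $A_2\vee\neg A_2$) with the probabilistic estimates; both versions make the reliance on excluded middle and distributivity equally visible.
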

\begin{proof}
The result follows by writing out in the following way:
\begin{equation}
\begin{split}
	\pee(A_1\wedge B_1)
	&=
	\pee(A_1\wedge B_1\wedge(B_2\vee\neg B_2))=\pee((A_1\wedge B_1\wedge B_2)\vee(A_1\wedge B_1\wedge\neg B_2))\\
	&\leq
	\pee(A_1\wedge B_1\wedge B_2)+\pee(A_1\wedge B_1\wedge\neg B_2)\leq\pee(A_1\wedge B_2)+\pee(B_1\wedge\neg B_2)\\
	&=
	\pee(A_1\wedge B_2)+\pee(B_1\wedge\neg B_2\wedge(A_2\vee\neg A_2))\\
	&=
	\pee(A_1\wedge B_2)+\pee((B_1\wedge\neg B_2\wedge A_2)\vee(B_1\wedge\neg B_2\wedge\neg A_2))\\
	&\leq
	\pee(A_1\wedge B_2)+\pee(B_1\wedge\neg B_2\wedge A_2)+\pee(B_1\wedge\neg B_2\wedge\neg A_2)\\
	&\leq
	\pee(A_1\wedge B_2)+\pee(A_2\wedge B_1)+\pee(\neg A_2\wedge \neg B_2).
\end{split}
\end{equation}	
\end{proof}
The inequality is however violated by quantum mechanics in the EPR-Bohm experiment \cite{Bohm57} of measurements on an entangled pair of spin-$\tfrac{1}{2}$ particles. In this setting, $A_1$ and $A_2$ are identified with two (mutually incompatible) measurements one experimenter can perform on one of the particles, and $B_1$ and $B_2$ with two (mutually incompatible) measurements an other experimenter can perform on the other particle. Each measurement has two possible outcomes, one regarded as a positive outcome (e.g. $A_1$), and the other as a negative outcome (e.g. $\neg A_1$).

The quantum logician may reject this inequality since its proof relies on an illegitimate use of the law of distributivity, but this is of course not the only solution to the paradox. In fact, quantum logic itself seems to hint towards an intuitionistic interpretation: the sentence $A_1\wedge B_1$ is both incompatible with $B_2$ and $\neg B_2$ and presents itself as an excluded middle. As Popper stated it:
\begin{quote}
	``It is of interest that the kind of change in classical logic which would fit what Birkhoff and von Neumann suggest [\ldots] would be the rejection of the law of excluded middle [\ldots], as proposed by Brouwer, but rejected by Birkhoff and von Neumann.'' \cite{Popper68}
\end{quote} 
This argument is perhaps a bit hand-waving, but it is of interest to note that the proof of Lemma \ref{Belllemma} also relies on an unlawful use of the law of excluded middle (at least from the point of view of the intuitionistic quantum logician).

The `discrepancy' that arises may be explained (somewhat sloppy) by observing that the negation as defined in quantum logic is somewhat intuitionisticaly in nature. That is, the negation of the statement $A\in\Delta$ is identified with the statement $A\in\Delta^c$, which is again a `positive' statement; it states that something will happen with probability 1. On the other hand, the definition of disjunction in quantum logic is typically non-intuitionistic and lies closer to the classical understanding of disjunction. Indeed, the truth of a disjunction $K_1\vee K_2$ does not imply the truth of either $K_1$ or $K_2$ in quantum logic. So, from an intuitionistic point of view, the disjunction in quantum logic is not a disjunction at all.

This is also roughly the viewpoint Coecke expresses in \cite{Coecke02}, and he argues that for an intuitionistic view on quantum mechanics
\begin{quote}
	``we formally need to introduce those additional propositions that express disjunctions of properties and that do not correspond to a property in the property lattice.'' 
\end{quote} 

These additional disjunctions are introduced by making use of Bruns and Lakser's theory of distributive hulls.
Concretely, this means that the quantum lattice $L(\h)$ is replaced by the lattice of distributive ideals of the quantum lattice:
\begin{equation} 
	\mathcal{DI}(L(\h)):=\{I\subset L(\h)\:;\: I\text{ is a distributive ideal}\},
\end{equation}
where by a distributive ideal we mean a non-empty subset $I$ such that
\begin{enumerate}
	\item if $K\in I$ and $K'\leq K$, then $K'\in I$.
	\item if $\mathcal{K}\subset I$ and for every $K'\in L(\h)$: $\left(\bigvee_{K\in\mathcal{K}}K\right)\wedge K'=\bigvee_{K\in\mathcal{K}}\left(K\wedge K'\right)$, then $\bigvee_{K\in\mathcal{K}}K\in I$.
\end{enumerate}

This new set is turned into a lattice by the following definitions:
\begin{itemize}
	\item $I_1\leq I_2$ iff $I_1\subset I_2$.
	\item $\bigwedge_{I\in\mathcal{I}}I:=\bigcap_{I\in\mathcal{I}}I$, $\mathcal{I}\subset\mathcal{DI}(L(\h))$.\footnote{One may show that this construction actually yields a distributive ideal.}
	\item $\bigvee_{I\in\mathcal{I}}I:=\bigwedge\{I'\in\mathcal{DI}(L(\h))\:;\:I\leq I' \forall I\in\mathcal{I}\}$, $\mathcal{I}\subset\mathcal{DI}(L(\h))$. 
\end{itemize}

With these definitions, $\mathcal{DI}(L(\h))$ is a complete distributive lattice. The propositions of the original lattice $L(\h)$ are identified with elements of $\mathcal{DI}(L(\h))$ by the injection 
\begin{equation}
	i:L(\h)\to\mathcal{DI}(L(\h)),\quad
	K\mapsto\downarrow K:=\{K'\in L(\h)\:;\: K'\leq K\}.
\end{equation}

As such, the construct of $\mathcal{DI}(L(\h))$ meets our desires; the new disjunction $\downarrow K_1\vee\downarrow K_2$ is not of the form $\downarrow K$ whenever $K_1\neq K_2$ and thus corresponds to a new element that does not correspond to any element in the original lattice. Because the new lattice is complete and the infinite laws of distributivity hold, it is also a complete Heyting algebra if one introduces the relative pseudo-complements:
\begin{itemize}
	\item $I_1\to I_2:=\bigvee\{I_3\in\mathcal{DI}(L(\h))\:;\: I_3\wedge I_1\leq I_2\}$.
\end{itemize}

\section{A classical perspective on quantum logic}

Complementary to the approach above, in stead of introducing a new disjunction that is more intuitionisticaly in nature than the one in quantum logic, on may want to define a new negation that is more classical in nature than the one in quantum logic. Recall that we took $A\in\Delta$ to stand for the statement that a measurement of $A$ will yield a result in $\Delta$ with probability one. In quantum logic, the negation of this proposition is the statement that a measurement of $A$ will yield a result in $\Delta^c$ with probability one. However, classically, the negation may be identified with the statement that a measurement of $A$ will not yield a result in $\Delta$ with probability one; i.e. one is not entirely certain that the measurement of $A$ will yield a result in $\Delta$. This statement is true for all the states in the set $(\h\backslash\mu_A(\Delta)\h)\cup\{0\}$. 

In this setting, it is easier to identify states with rays in the Hilbert space. Therefore we introduce the ray space
\begin{equation}
	R(\h):=\{[\psi]\:;\:\psi\in\h\backslash\{0\}\},\quad [\psi]:=\{\lambda\psi\:;\:\lambda\in\mathbb{C}\}.
\end{equation}
Propositions may then be identified with elements of the power set $\mathcal{P}(R(\h))$. Indeed, the proposition $A\in\Delta$ is now identified with the set $$\{[\psi]\in R(\h)\:;\:\psi\in\mu_A(\Delta)\h\}$$ and its negation, $\neg(A\in\Delta)$, with the complement of this set. The set $\mathcal{P}(R(\h))$ is turned into a lattice by introducing order, meet and join in the usual set-theoretic way:
\begin{itemize}
	\item $S_1\leq S_2$ iff $S_1\subset S_2$.
	\item $\bigwedge_{S\in\mathcal{S}}S:=\bigcap_{S\in\mathcal{S}}S$. 
	\item $\bigvee_{S\in\mathcal{S}}S:=\bigcup_{S\in\mathcal{S}}S$.  
\end{itemize}

Although this approach differs strongly from the intuitionistic approach, it is remarkable that both constructions are in fact identical:

\begin{stelling} The lattices $\mathcal{DI}(L(\h))$ and $\mathcal{P}(R(\h))$ are isomorphic (as complete bounded lattices). Consequently, the Heyting algebra $\mathcal{DI}(L(\h))$ is Boolean.\footnote{This second statement is in fact a consequence of the more general example below Lemma 1 in \cite{Coecke02}.}
\end{stelling}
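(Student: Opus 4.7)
The plan is to construct explicit inverse bijections between $\mathcal{DI}(L(\h))$ and $\mathcal{P}(R(\h))$ that respect the lattice structure, and then read off the Boolean property from the fact that $\mathcal{P}(R(\h))$ is Boolean under the set-theoretic operations. Identifying each ray $[\psi]\in R(\h)$ with the corresponding one-dimensional subspace in $L(\h)$, I would define $\rho:\mathcal{DI}(L(\h))\to\mathcal{P}(R(\h))$ by $\rho(I):=\{[\psi]\in R(\h)\:;\:[\psi]\in I\}$ and $\iota:\mathcal{P}(R(\h))\to\mathcal{DI}(L(\h))$ by letting $\iota(S)$ consist of every $K\in L(\h)$ all of whose rays belong to $S$ (which, vacuously, includes the zero subspace).

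The first step would be to verify that $\iota(S)$ really is a distributive ideal. Downward closure is immediate. For the second axiom, assume $\mathcal{K}\subset\iota(S)$ and that $\bigvee\mathcal{K}$ satisfies the distributive law against every element of $L(\h)$. Then for any ray $[\psi]\subset\bigvee\mathcal{K}$, distributing against $[\psi]$ yields $[\psi]=\bigvee_{K\in\mathcal{K}}([\psi]\wedge K)$; since each term is either $0$ or $[\psi]$, some $K\in\mathcal{K}$ must contain $[\psi]$, hence $[\psi]\in S$.

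Next I would show that $\rho$ and $\iota$ are mutually inverse. The identity $\rho\circ\iota=\mathrm{id}$ is immediate from the definitions. The substantive direction is $\iota(\rho(I))\subset I$: given $K\in\iota(\rho(I))$, write $K=\bigvee\{[\psi]\:;\:[\psi]\subset K\}$ and observe that this canonical decomposition is itself a distributive join in $L(\h)$, because for any $K'\in L(\h)$ both $K\cap K'$ and $K$ equal the closed linear span of their rays, so
\begin{equation}
K\wedge K'=\bigvee\{[\psi]\:;\:[\psi]\subset K\cap K'\}=\bigvee\{[\psi]\wedge K'\:;\:[\psi]\subset K\}.
\end{equation}
Since every ray of $K$ lies in $I$ by hypothesis, the defining property of distributive ideals forces $K\in I$. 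Both $\rho$ and $\iota$ are manifestly order-preserving, so the bijection is an isomorphism of complete bounded lattices, and the Boolean structure of $\mathcal{P}(R(\h))$ transports to $\mathcal{DI}(L(\h))$.

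The main obstacle I anticipate is precisely the observation that the canonical ray decomposition $K=\bigvee\{[\psi]\:;\:[\psi]\subset K\}$ is always a distributive join in $L(\h)$. The remaining steps are essentially definition-chasing, but this algebraic fact is what actually drives the theorem: it lets one recover each element of a distributive ideal from the set of rays it contains, and it is also why the Bruns--Lakser distributive hull of $L(\h)$ collapses all the way to a Boolean algebra rather than producing a proper Heyting algebra, plausibly motivating the search for a weaker construction in the sequel.
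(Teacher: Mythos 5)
Your proposal is correct and follows essentially the same route as the paper: the same pair of mutually inverse maps (send $S$ to the set of subspaces all of whose rays lie in $S$, and send $I$ to its set of rays), and the same argument of distributing $\bigvee\mathcal{K}$ against a single ray $[\psi]$ to verify the second distributive-ideal axiom. In fact your explicit check that $\iota(\rho(I))\subset I$ --- via the observation that $K=\bigvee\{[\psi]\:;\:[\psi]\subset K\}$ is always a distributive join in $L(\h)$ --- supplies a direction of the inverse verification that the paper's proof leaves implicit (it only establishes $g\circ f=\mathrm{id}$ before concluding the maps are inverse), so you have correctly isolated the one non-trivial algebraic fact on which the theorem turns.
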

\begin{proof}
We define the following function $f:\mathcal{P}(R(\h))\to\mathcal{P}(L(\h))$:
\begin{equation}
\begin{split}
	f(S)
	&:=\left\{K\in L(\h)\:;\:K\backslash\{0\}\subset\{\psi\in\h\:;\:[\psi]\in S\}\right\}\\
	&=\left\{K\in L(\h)\:;\:\bigcup_{\psi\in K\backslash\{0\}}\{[\psi]\}\subset S\right\}.
\end{split}
\end{equation}
Notice that it satisfies
\begin{equation}
	f(R(\h))=L(\h),\quad f(\varnothing)=\{0\}\quad\text{and}\quad f(\{[\psi]\})=[\psi]\quad\forall[\psi]\in R(\h).
\end{equation}
\indent

Now, for every $S\in\mathcal{P}(R(\h))$, $f(S)$ is in fact a distributive ideal. To show this, we have to show that $f(S)$ satisfies the properties 1 and 2. Suppose $K\in f(S)$ and $K'\leq K$. Then
\begin{equation}
	\bigcup_{\psi\in K'\backslash\{0\}}\{[\psi]\}\subset\bigcup_{\psi\in K\backslash\{0\}}\{[\psi]\}\subset S
\end{equation}
and thus $K'\in f(S)$.

To show property 2, we may assume $S\neq R(\h)$ (for $S=R(\h)$ 2 is trivially satisfied). Suppose $\mathcal{K}\subset f(S)$ such that for every $K'\in L(\h)$: $\left(\bigvee_{K\in\mathcal{K}}K\right)\wedge K'=\bigvee_{K\in\mathcal{K}}\left(K\wedge K'\right)$. We have to show that in that case $\bigvee_{K\in\mathcal{K}}K\in f(S)$. 

Suppose this isn't the case. Then there is a non-zero vector $\psi\in\bigvee_{K\in\mathcal{K}}K$ such that $[\psi]\notin S$. Furthermore $\psi\notin K$ for all $K\in\mathcal{K}$. But it then follows that
\begin{equation}
	[\psi]=\left(\bigvee_{K\in\mathcal{K}}K\right)\wedge [\psi]=\bigvee_{K\in\mathcal{K}}\left(K\wedge [\psi]\right)=\{0\}.
\end{equation}
This proves that $f:\mathcal{P}(R(\h))\to\mathcal{DI}(L(\h))$.

Next, consider the map
\begin{equation} 
	g:\mathcal{DI}(L(\h))\to\mathcal{P}(R(\h)),\quad g:I\mapsto\bigcup_{K\in I}\bigcup_{\psi\in K\backslash\{0\}}\{[\psi]\}.
\end{equation}
We will show that it is the inverse of $f$. First note that for every set $S\in\mathcal{P}(R(\h))$ one has
\begin{equation}
	g(f(S))=\bigcup_{K\in f(S)}\bigcup_{\psi\in K\backslash\{0\}}\{[\psi]\}\subset S.
\end{equation}
Now suppose $[\psi]\in S$, then $[\psi]\in f(S)$ and $[\psi]\in g(f(S))$. Thus $g(f(S))=S$ for all $S\in\mathcal{P}(R(\h))$.

So we have shown that $\mathcal{DI}(L(\h))$ and $\mathcal{P}(R(\h))$ are isomorphic as sets. However, since both $f$ and $g$ respect the partial order structure, it follows that $\mathcal{DI}(L(\h))$ and $\mathcal{P}(R(\h))$ are also isomorphic as complete lattices.
\end{proof}

\begin{gevolg}
There exists no probability function on $\mathcal{DI}(L(\h))$ that generalizes the Born rule, for by Lemma \ref{Belllemma}, any probability function would satisfy (\ref{Bellineq}).
\end{gevolg}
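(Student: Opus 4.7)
The plan is to combine the theorem just proved with Lemma \ref{Belllemma} and the known EPR-Bohm violation. By the theorem, $\mathcal{DI}(L(\h))$ is a Boolean algebra and hence validates the classical logical laws (distributivity and excluded middle) that are tacitly invoked in the proof of Lemma \ref{Belllemma}. Consequently, any probability function $\pee:\mathcal{DI}(L(\h))\to [0,1]$ satisfying the two listed axioms of that lemma — monotonicity under $\to$ and subadditivity under $\vee$ — is forced to satisfy inequality (\ref{Bellineq}) for every choice of four elements.

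I would then pull this constraint back along the embedding $i:L(\h)\to\mathcal{DI}(L(\h))$ to the standard EPR-Bohm setting. Take $\h=\mathbb{C}^2\otimes\mathbb{C}^2$ in the singlet state and let $K_{A_1},K_{A_2},K_{B_1},K_{B_2}\in L(\h)$ be the spin-up eigenspaces for spin projections on the two particles along four suitably chosen axes. Since $i$ preserves arbitrary meets and $\pee$ is supposed to extend the Born rule on $i(L(\h))$, the terms $\pee(i(K_{A_j})\wedge i(K_{B_k}))$ coincide with the Born-rule joint probabilities $\text{Tr}(\rho\, P_{A_j}P_{B_k})$ for these commuting projections. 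Boolean complementation together with inclusion-exclusion gives $\pee(\neg i(K_{A_2})\wedge \neg i(K_{B_2})) = 1-\pee(i(K_{A_2}))-\pee(i(K_{B_2}))+\pee(i(K_{A_2})\wedge i(K_{B_2}))$, so the last term is also determined by Born-rule values. For the standard CHSH angles these values violate (\ref{Bellineq}), producing the required contradiction.

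The subtle point I expect to navigate is that in $\mathcal{DI}(L(\h))$ the Boolean complement $\neg i(K)$ is strictly larger than the embedded orthocomplement $i(K^\perp)$, so one cannot simply identify $\pee(\neg i(K))$ with the Born value of $P_{K^\perp}$. The resolution is that this identification is never needed: the inequality (\ref{Bellineq}) only uses $\pee(\neg x)=1-\pee(x)$, which is automatic in any probability theory on a Boolean algebra, together with Born values on meets of commuting projections. Both features are unavoidable for any $\pee$ that extends the Born rule to $\mathcal{DI}(L(\h))$, and together with the EPR-Bohm calculation they preclude the existence of any such $\pee$.
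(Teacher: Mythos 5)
Your proposal is correct and follows the route the paper itself intends: the corollary carries its entire justification in its statement (Booleanness of $\mathcal{DI}(L(\h))$ from the Theorem, then Lemma \ref{Belllemma}, then the EPR--Bohm violation of (\ref{Bellineq})), and no further proof is supplied. You in fact go beyond the paper in explicitly pinning down the fourth term $\pee(\neg A_2\wedge\neg B_2)$ via finite additivity and inclusion--exclusion --- a step the paper glosses over and which is genuinely needed, since the lemma's two hypotheses (monotonicity and subadditivity) only bound that Boolean-complement term from \emph{below} by the Born value $\Trace(\rho\, P_{A_2}^{\perp}P_{B_2}^{\perp})$, which by itself is the wrong direction for producing a violation.
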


\section{A weakly intuitionistic perspective on quantum logic}


That the application of Bruns and Lakser's theory to the quantum lattice results in the construction of a Boolean algebra may be explained in the following way. The introduction of a new disjunction forces the introduction of a new negation. Indeed, the new negation in $\mathcal{DI}(L(\h))$ is defined as $\neg I:= I\to \downarrow 0$ and it is much weaker than the negation in quantum logic because one has
\begin{equation}
	\downarrow\neg K\leq\neg\downarrow K,\quad\forall K\in L(\h)
\end{equation}
with equality iff $K=0$ or $K=\h$. From the perspective of $\mathcal{P}(R(\h))$ it is clear to see that the negation in $\mathcal{DI}(L(\h))$ behaves classical rather than intuitionistic. 

It would seem more intuitionistic if one could generalize the negation of the quantum lattice to a negation in the lattice $\mathcal{DI}(L(\h))$. That is, by introducing a function $\sim:\mathcal{DI}(L(\h))\to\mathcal{DI}(L(\h))$ such that $\sim\downarrow K=\downarrow\neg K$ for all $K\in L(\h)$. In such a scheme, the negation of $A\in\Delta$ would coincide with $A\in\Delta^c$ like in quantum logic, but the disjunction of $A\in\Delta$ and $A\in\Delta^c$ would not be a triviality.\footnote{It is not unlikely that this scheme is also what Coecke envisaged in his paper.} This is in fact an idea explored in \cite[p 105--106]{Hermens10}.
Although in that text the emphasis is more on the set $\mathcal{P}(R(\h))$, the analysis is the same as for $\mathcal{DI}(L(\h))$ because one can use the embedding $r:L(\h)\to\mathcal{P}(R(\h))$ given by $r(K):=\{[\psi]\in R(\h)\:;\:\psi\in K\}$ for which the diagram
\begin{equation*}
	\xymatrix@!C{
		L(\h)\ar@{^{(}->}[rr]^i\ar@{^{(}->}[dd]^r & & \mathcal{DI}(L(\h))\ar@<1ex>[ddll]^{f^{-1}} \\
		& & \\
		\mathcal{P}(R(\h))\ar@<1ex>[uurr]^f & & }
\end{equation*}
commutes.

The generalization is straight forward. First note that $$r(\neg K)=\{[\psi]\:;\:\langle\psi,\phi\rangle=0\:\forall \phi\in K\}.$$ We therefore take
\begin{equation}
	\sim S:=\{[\psi]\in R(\h)\:;\:\langle\psi,\phi\rangle=0\:\forall \phi \text{ with }[\phi]\in S\}.
\end{equation}
Indeed, this results in $\sim r(K)=r(\neg K)$ for all $K\in L(\h)$. The `pseudo-negation' $\sim$ also behaves typically intuitionistic since we have
\begin{gather}
	S\vee\sim S=R(\h)\text{ iff }S=\varnothing\text{ or }S=R(\h),\\
	\sim S\vee\sim\sim S=R(\h)\text{ iff }S=\varnothing\text{ or }S=R(\h).
\end{gather}
However, one does have that
\begin{equation}
	\sim\sim(S\vee\sim S)=R(\h),\quad\forall S\in \mathcal{P}(R(\h)).
\end{equation}
One may also show that of the De Morgan laws only
\begin{equation}
	\sim S_1\wedge\sim S_2=\sim(S_1\vee S_2),\quad\forall S_1,S_2\in\mathcal{P}(R(\h))
\end{equation}
holds, and the other only holds in one direction:
\begin{equation}
	\sim S_1\vee\sim S_2\leq\sim(S_1\wedge S_2),\quad\forall S_1,S_2\in\mathcal{P}(R(\h)).
\end{equation}
The pseudo-negation also relates the `intuitionistic' disjunction of $\mathcal{P}(R(\h))$ to the `classical' disjunction of $L(\h)$ through the following equality:
\begin{equation}
	\sim\sim\left(\bigvee_{K\in\mathcal{K}}r(K)\right)=r\left(\bigvee_{K\in\mathcal{K}}K\right),\quad \text{for every}\quad \mathcal{K}\subset L(\h).
\end{equation}
So for any subset $S$ of $R(\h)$, its double pseudo-negation coincides with the closed linear subspace spanned by al the elements of $S$.

 Although the pseudo-negation appears to behave intuitionisticaly, there is no trivial way to incorporate the lattice $(\mathcal{P}(R(\h)),\vee,\wedge,\sim)$ in a Heyting algebra. This is because the relative pseudo-complement for the lattice $(\mathcal{P}(R(\h)),\vee,\wedge)$ is uniquely defined. There may however still be the possibility that a satisfactory implication relation $\to$ (that is not a relative pseudo-complement) may be defined on this lattice such that $S\to\bot=\sim S$ for all $S\in\mathcal{P}(R(\h))$. Indeed, we have the following result:
 
\begin{stelling}
There exists an implication relation such that $(\mathcal{P}(R(\h)),\vee,\wedge,\to)$ is a weakly Heyting algebra\footnote{The notion of weakly Heyting algebras was first introduced in \cite{Celani05}.}, i.e. a bounded distributive lattice in which for all $S_1,S_2,S_3\in\mathcal{P}(R(\h))$ one has
\begin{enumerate}
\item $S_1\to S_1=\top$,
\item $S_1\to(S_2\wedge S_3)=(S_1\to S_2)\wedge(S_1\to S_3)$,
\item $(S_1\vee S_2)\to S_3=(S_1\to S_3)\wedge (S_2\to S_3)$,
\item $(S_1\to S_2)\wedge (S_2\to S_3)\leq S_1\to S_3$,
\end{enumerate}
in such a way that for all $S\in\mathcal{P}(R(\h))$ 
\begin{equation}\label{(v)}
	S\to\bot=\sim S.
\end{equation}
\end{stelling}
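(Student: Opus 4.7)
The plan is to propose an explicit definition of $\to$ and then verify the five required properties in turn. Since $(\mathcal{P}(R(\h)),\cup,\cap)$ is a power set lattice, it is already bounded and distributive, so all the content lies in constructing an appropriate $\to$.

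Guided by the required identity $S\to\bot=\sim S$, I would try setting
$$S_1\to S_2:=\:\sim\!(S_1\setminus S_2),$$
where $\setminus$ denotes ordinary set-theoretic difference. Property (v) is then immediate from $S\setminus\varnothing=S$, and property (i) from $S_1\setminus S_1=\varnothing$ together with $\sim\varnothing=R(\h)$.

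For (ii) and (iii), I would expand both sides using the De Morgan identity $\sim\! A\wedge\sim\! B=\:\sim\!(A\vee B)$ (already stated in the paper) in combination with the elementary set-theoretic identities
$$S_1\setminus(S_2\cap S_3)=(S_1\setminus S_2)\cup(S_1\setminus S_3),\qquad (S_1\cup S_2)\setminus S_3=(S_1\setminus S_3)\cup(S_2\setminus S_3).$$
Both sides of (ii) and (iii) then collapse to the same expression. For (iv), the same De Morgan identity rewrites the left-hand side as $\sim\!\bigl((S_1\setminus S_2)\cup(S_2\setminus S_3)\bigr)$; since $\sim$ is obviously anti-monotone, it then suffices to show the set-theoretic inclusion $S_1\setminus S_3\subseteq(S_1\setminus S_2)\cup(S_2\setminus S_3)$, which follows from a one-line case distinction on whether an element of $S_1\setminus S_3$ lies in $S_2$ or not.

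The main (mild) obstacle I anticipate is simply identifying the workable formula $\sim\!(S_1\setminus S_2)$ for $\to$; once it is on the table, all four axioms reduce via De Morgan and anti-monotonicity of $\sim$ to purely set-theoretic inclusions, and no further property of the orthogonality relation on $\h$ needs to be invoked.
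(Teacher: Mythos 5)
Your proposal is correct, and in fact your formula defines the \emph{same} implication as the paper's, though you arrive at it and verify it quite differently. The paper defines $S_1\to S_2$ as a meet over atoms, $\bigwedge_{s\leq S_1}\sim\sim(\sim s\vee(s\wedge S_2))$ (with $\bot\to S_2:=\top$); since an atom $s$ contributes $\top$ to this meet when $s\leq S_2$ and $\sim s$ when $s\nleq S_2$, and since $\sim$ turns arbitrary unions into intersections, that meet collapses to exactly $\sim(S_1\setminus S_2)$. Where the paper then checks axioms (i)--(iv) by a somewhat laborious case analysis (atom versus non-atom, $S_1\leq S_2$ versus $S_1\nleq S_2$, etc.), repeatedly invoking identities such as $\sim\sim(\sim s\vee s)=\top$ and $\sim\sim\sim S=\sim S$, your closed form reduces everything to three ingredients: the De Morgan law $\sim A\wedge\sim B=\sim(A\vee B)$ (which indeed extends to arbitrary families directly from the definition of $\sim$), antitonicity of $\sim$, and elementary Boolean identities for set difference --- including the ``triangle inequality'' $S_1\setminus S_3\subseteq(S_1\setminus S_2)\cup(S_2\setminus S_3)$ for (iv). All of these check out, as do the base facts $\sim\varnothing=R(\h)$ and $S\setminus\varnothing=S$ needed for (i) and (v). What your route buys is brevity and transparency: it makes clear that the weak-Heyting axioms depend only on the facts that $\sim$ is antitone and turns joins into meets, not on any finer structure of orthogonality in $\h$. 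What the paper's atomwise formulation buys is a more suggestive reading of the implication as a pointwise (state-by-state) condition, which fits its philosophical discussion, and it makes the comparison with the relative pseudo-complement more direct. It would strengthen your write-up to note explicitly that your $\to$ coincides with the paper's, but as a proof of the existence statement it is complete as it stands.
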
 
\begin{proof}
Let $\mathcal{P}_1$ denote the set of all atoms in $\mathcal{P}(R(\h))$ (note that there is a bijection between atoms and one-dimensional subspaces of $\h$). We now define
\begin{equation}\label{imp}
	S_1\to S_2:=\begin{cases}
	\top, &\text{if }S_1=\bot,\\
	\bigwedge_{\{s\in\mathcal{P}_1\:;\:s\leq S_1\}}\sim\sim(\sim s\vee(s\wedge S_2)), &\text{otherwise}.
	\end{cases}
\end{equation}
We will show that this implication relation satisfies the desired properties.

(i) Let $S\in\mathcal{P}(R(\h))$. If $S=\bot$, $S\to S=\top$ follows directly from the definition, so suppose $S\neq\bot$. In that case we have
\begin{equation}
	S\to S
	=
	\bigwedge_{\left\{\substack{s\in\mathcal{P}_1;\\s\leq S}\right\}}\sim\sim(\sim s\vee(s\wedge S))
	=
	\bigwedge_{\left\{\substack{s\in\mathcal{P}_1;\\s\leq S}\right\}}\sim\sim(\sim s\vee s)
	=
	\bigwedge_{\left\{\substack{s\in\mathcal{P}_1;\\s\leq S}\right\}}\top=\top
\end{equation}
Note that the same argument shows that if $S_1\leq S_2$ then $S_1\to S_2=\top$.
(ii) If $S_1=\bot$, the assertion is trivial. Suppose $S_1$ is an atom. We can distinct four scenarios: (1) $S_1\leq S_2\wedge S_3$, (2) $S_1\leq S_2$, $S_1\nleq S_3$, (3) $S_1\nleq S_2$, $S_1\leq S_3$ and (4) $S_1\nleq S_2$ and $S_1\nleq S_3$. In each of the cases it is easy to see that (ii) is satisfied. For all other $S_1$ we have
\begin{equation}
\begin{split}
	S_1\to(S_2\wedge S_3)
	&=
	\bigwedge_{\left\{\substack{s\in\mathcal{P}_1;\\s\leq S_1}\right\}}s\to(S_2\wedge S_3)
	=
	\bigwedge_{\left\{\substack{s\in\mathcal{P}_1;\\s\leq S_1}\right\}}(s\to S_2)\wedge (s\to S_3)\\
	&=
	(S_1\to S_2)\wedge (S_1\to S_3).
\end{split}
\end{equation}
(iii) If $S_1$ or $S_2$ equals $\bot$ the relation is again trivial, so suppose $S_1\neq\bot$ and $S_2\neq\bot$. We have
\begin{equation}
\begin{split}
	(S_1\vee S_2)\to S_3
	&=
	\bigwedge_{\left\{\substack{s\in\mathcal{P}_1;\\s\leq S_1\vee S_2}\right\}}s\to S_3
	=
	\left(\bigwedge_{\left\{\substack{s\in\mathcal{P}_1;\\s\leq S_1}\right\}}s\to S_3\right)\wedge
	\left(\bigwedge_{\left\{\substack{s\in\mathcal{P}_1;\\s'\leq S_2}\right\}}s'\to S_3\right)\\
	&=
	(S_1\to S_3)\wedge (S_2\to S_3).
\end{split}
\end{equation}
(iv) If $S_1=\bot$ the inequality follows immediately because then the right-hand side equals $\top$. The same also goes if $S_1\leq S_3$ so we suppose $S_1\nleq S_3$. If $S_1$ is an atom and $S_1\leq S_2$ we have $S_1\to S_2=\top$ (see proof of (i)) and
\begin{equation}
	(S_1\to S_2)\wedge (S_2\to S_3)=(S_2\to S_3)=\bigwedge_{\left\{\substack{s\in\mathcal{P}_1;\\s\leq S_2}\right\}}s\to S_3\leq S_1\to S_3.
\end{equation}
If $S_1\nleq S_2$, then 
\begin{equation}
	S_1\to S_2=\sim\sim(\sim S_1\vee (S_1\wedge S_2))=\sim\sim\sim S_1=\sim S_1.
\end{equation}
Similarly $S_1\to S_3=\sim S_1$ and thus
\begin{equation}
	(S_1\to S_2)\wedge (S_2\to S_3)=\sim S_1 \wedge S_2\to S_3\leq\sim S_1= S_1\to S_3.
\end{equation}
From this, the case where $S_1$ isn't an atom also follows:
\begin{equation}
\begin{split}
	(S_1\to S_2)\wedge (S_2\to S_3)
	&=
	\left(\bigwedge_{\left\{\substack{s\in\mathcal{P}_1;\\s\leq S_1}\right\}}s\to S_2\right)\wedge (S_2\to S_3)
	=
	\bigwedge_{\left\{\substack{s\in\mathcal{P}_1;\\s\leq S_1}\right\}}\left((s\to S_2)\wedge (S_2\to S_3)\right)\\
	&\leq
	\bigwedge_{\left\{\substack{s\in\mathcal{P}_1;\\s\leq S_1}\right\}}\left(s\to S_3\right)
	=
	S_1\to S_3.
\end{split}
\end{equation}
Finally, we have to show that (\ref{(v)}) holds:
\begin{equation}
	S\to\bot
	=
	\bigwedge_{\left\{\substack{s\in\mathcal{P}_1;\\s\leq S}\right\}}s\to\bot
	=
	\bigwedge_{\left\{\substack{s\in\mathcal{P}_1;\\s\leq S}\right\}}\sim s=\sim S.
\end{equation}
\end{proof}
 
However, it remains a difficult philosophical question what counts as a satisfactory implication relation for quantum logic, and it is not clear if (\ref{imp}) meets the requirements. It is also not clear if (\ref{imp}) is the unique implication relation that turns $(\mathcal{P}(R(\h)),\vee,\wedge)$ into a weakly Heyting algebra such that (\ref{(v)}) holds. And there is of course also still the question if a probability function can be defined on this lattice that generalizes the Born rule and explains the violation of the inequality (\ref{Bellineq}). Either way, we do believe that the weakly intuitionistic quantum logic defined here is philosophically at least a bit more satisfying than the logic of Birkhoff and von Neumann and perhaps even a step in the right direction for a comprehensible quantum logic.

\addcontentsline{toc}{section}{References}
\markboth{}{}
\bibliographystyle{rhalphanum}
\bibliography{referenties}

\end{document}